\title{Terminality implies non-signalling}  
\author{Bob Coecke 
  \institute{University of Oxford}
  \email{bob.coecke@cs.ox.ac.uk}  
  }
\date{}
\begin{document} 
\maketitle
\begin{abstract} 
A `process theory' is any theory of systems and processes which admits sequential and parallel composition.  `Terminality' 
unifies normalisation of pure states, trace-preservation of CP-maps, and adding up to identity of positive operators in quantum theory, and generalises this to arbitrary process theories.  We show that  terminality and non-signalling coincide in any process theory, provided one makes causal structure explicit.  In fact, making causal structure explicit is  necessary to even make sense of non-signalling in process theories.  We conclude that because of its much simpler mathematical form, terminality should  be taken to be a more fundamental notion than non-signalling. 




\end{abstract}

\section{Introduction}

Causality related notions are prominent in many areas of physics, and the relationships between these are by no means obvious.  Examples that are relevant to us are:
\bit
\item[{\bf C1}] Relativistic space-time  is often abstracted as a partial ordering, called \em causal structure \em \cite{penroseorder}. 
\item[{\bf C2}] In quantum information, for example in the context of generalised probabilistic theories \cite{Barrett}, one often relies on the notion of  \em non-signalling\em, by means of which one intends to implement these relativistic constraints for spatially distributed information-processing devices.
\item[{\bf C3}] In quantum foundations, an axiom called \em causality \em has recently been  put forward in \cite{Chiri1, Chiri2}.  For process theories \cite{JTF, CKBook} this axiom is equivalent to the mathematical notion of \em terminality\em,\footnote{Process theories are symmetric monoidal categories and causality boils down to terminality of the tensor unit.} and here we will adopt this terminology  in order to avoid confusing multiple uses of the term `causal(ity)'.\footnote{{\bf C1}, {\bf C2} as well as {\bf C3} are ofter referred to as causality.}   
\eit 
These three notions do not straightforwardly match each other. For example, in \cite{Chiri1, Chiri2} the causality axiom stands for non-signalling-from-the future (i.e.~time-like), while the above mentioned notion of non-signalling concerns  space-like  separation. Also, in \cite{CoeckeLal} it was shown that while relativistic space-like separation is invariant under time-reversal,  this is by no means the case for non-signalling. 

Here we investigate how these  notions are related.  Firstly, we argue that to even make sense of non-signalling for general process theories (cf.~{\bf C2}) one needs to make causal structure  (cf.~{\bf C1}) explicit. Then, the resulting definition of non-signalling, for the particular case of two systems,   becomes equivalent to terminality (cf.~{\bf C3}). In the process of doing so we also resolve the seeming contradiction in \cite{CoeckeLal}.



Since terminality 
is mathematically way simpler than non-signalling, it should be taken to be the more fundamental notion. %
Such a stance is already  adopted in \cite{CKBook} where it is shown that within the context of process theories  much of the relevant structure of quantum theory directly follows from terminality.  Terminality also  yields a covariance theorem \cite{CRCaucat} (building further on earlier work in \cite{Markopoulou, BIP}). 

\paragraph{Related work.} Pearl  resolved several paradoxes in probability theory \cite{PearlBook} by making causal structure explicit within probability theory, an achievement for which he received the Turing Award.  This treatment of probability theory also resulted in a fine-grained analysis of Bell's theorem by Wood-Spekkens \cite{WoodSpekkens}, exploiting many tool's of Pearl's approach.  Very recently, two other papers appeared which included very similar results, by Fritz \cite{FritzII} and Henson-Lal-Pusey \cite{HLP}. A first difference is that the  results of these authors  were obtained  within a more restrictive framework, but exploit more general scenarios.   A more fundamental difference is that these authors fix one particular causal structure and study the resulting correlations.  Here we will universally quantify over all causal structures that leave the two parties separate.  Our goal is indeed not to obtain a result with respect to a fixed causal structure, but relate terminality and non-signalling for any causal structure that is allowed for.

 
\section{Process theories with discarding}  
  
Here, by a \em  process theory \em \cite{JTF, CKBook} we mean a collection of \em systems\em , represented by \em wires\em, and \em processes\em, represented by \em boxes \em with wires as inputs and outputs.  Moreover, when we plug these boxes together: 
\[
\begin{tikzpicture}
	\begin{pgfonlayer}{nodelayer}
		\node [style={medium box}] (0) at (0, 1) {$g$};
		\node [style=none] (1) at (-1.25, -0.75) {};
		\node [style=none] (2) at (-0.75, 0.5) {};
		\node [style=none] (3) at (-0.75, 2.5) {};
		\node [style=none] (4) at (-0.75, 1.5) {};
		\node [style={right label}] (5) at (-2, 2.25) {$A$};  
		\node [style=none] (6) at (0.75, 0.5) {};
		\node [style=none] (7) at (1.25, -0.75) {};
		\node [style={medium box}] (8) at (-1.75, -1.25) {$f$};
		\node [style=none] (9) at (-2.25, -0.75) {};
		\node [style=none] (10) at (-2.25, 2.5) {};
		\node [style=none] (11) at (1.25, -2.5) {};
		\node [style=none] (12) at (1.25, -1.75) {};
		\node [style={small box}] (13) at (1.25, -1.25) {$h$};
		\node [style=none] (14) at (0.75, 1.5) {};
		\node [style=none] (15) at (0.75, 2.5) {};
		\node [style={right label}] (16) at (-0.5, 2.25) {$B$};
		\node [style={right label}] (17) at (1, 2.25) {$C$};
		\node [style={right label}] (18) at (-0.75, -0.25) {$A$};
		\node [style={right label}] (19) at (1.25, 0) {$D$};
		\node [style={right label}] (20) at (1.5, -2.25) {$A$};
	\end{pgfonlayer}
	\begin{pgfonlayer}{edgelayer}
		\draw [in=-90, out=90, looseness=1.00] (1.center) to (2.center);
		\draw (4.center) to (3.center);
		\draw [in=-90, out=90, looseness=1.00] (7.center) to (6.center);
		\draw (9.center) to (10.center);
		\draw (11.center) to (12.center);
		\draw (14.center) to (15.center);
	\end{pgfonlayer}
\end{tikzpicture}}
\]
the resulting \em diagram \em should also be a process. For the purposes of this paper, outputs should be connected to inputs, and   diagrams should never contain causal loops. 

\begin{remark}
Equivalently,  one can say that a process theory is a symmetric monoidal category. 
\end{remark}

\begin{remark}
In the above diagram we used labels to distinguish distinct systems  and distinct wires, but we could also have done this by relying on different kinds of wires:
\[
\begin{tikzpicture}
	\begin{pgfonlayer}{nodelayer}
		\node [style=medium box] (0) at (0, 1) {$g$};
		\node [style=none] (1) at (-1.25, -0.75) {};
		\node [style=none] (2) at (-0.75, 0.5) {};
		\node [style=none] (3) at (-0.75, 2.5) {};
		\node [style=none] (4) at (-0.75, 1.5) {};
		\node [style=none] (5) at (0.75, 0.5) {};
		\node [style=none] (6) at (1.25, -0.75) {};
		\node [style=medium box] (7) at (-1.75, -1.25) {$f$};
		\node [style=none] (8) at (-2.25, -0.75) {};
		\node [style=none] (9) at (-2.25, 2.5) {};
		\node [style=none] (10) at (1.25, -2.5) {};
		\node [style=none] (11) at (1.25, -1.75) {};
		\node [style=small box] (12) at (1.25, -1.25) {$h$};
		\node [style=none] (13) at (0.75, 1.5) {};
		\node [style=none] (14) at (0.75, 2.5) {};
	\end{pgfonlayer}
	\begin{pgfonlayer}{edgelayer}
		\draw [in=-90, out=90, looseness=1.00] (1.center) to (2.center);
		\draw [style=dashed] (4.center) to (3.center);
		\draw [style=boldedgedashed, in=-90, out=90, looseness=1.00] (6.center) to (5.center);
		\draw (8.center) to (9.center);
		\draw (10.center) to (11.center);
		\draw [style=boldedge] (13.center) to (14.center);
	\end{pgfonlayer}
\end{tikzpicture}}
\]
From now on we will omit labels or different kinds of wires, but all wires (within one diagram) may be interpreted as distinct. 
\end{remark}

A \em state \em is a process without inputs, and an \em effect \em is a process without outputs.  We will  assume that for each system there exists a \em discarding \em effect, which we denote as follows:
\[
\begin{tikzpicture}
	\begin{pgfonlayer}{nodelayer}
		\node [style=none] (0) at (0, -0.75) {};
		\node [style=upground] (1) at (0, 0.75) {};
	\end{pgfonlayer}
	\begin{pgfonlayer}{edgelayer}
		\draw [style=swap] (1) to (0.center); 
	\end{pgfonlayer}
\end{tikzpicture}}
\]

\begin{example}
When viewing probability theory as a process theory,  boxes are stochastic maps, states are probability distributions, and discarding is marginalization.  
\end{example}

\begin{example}
When viewing quantum theory as a process theory,  boxes include CP-maps as well as measurements and classical data processes, states include density operators, and discarding is either the trace or deletion of classical dada.  A detailed description of quantum theory as a process theory is in \cite{CK}.
\end{example}

\section{Terminality vs.~non-signalling} 

\begin{definition}
A process theory is \em terminal \em if for every process $f$ we have: 
\beq\label{eq:term}
\begin{tikzpicture}
	\begin{pgfonlayer}{nodelayer}
		\node [style=none] (0) at (-1.75, 0.5) {};
		\node [style=none] (1) at (-1.75, -1.25) {};
		\node [style=none] (2) at (-1.75, -0.5) {};
		\node [style=small box] (3) at (-1.75, 0) {$f$};
		\node [style=upground] (4) at (-1.75, 1.25) {};
		\node [style=none] (5) at (1.75, -1.25) {};
		\node [style=upground] (6) at (1.75, -0.25) {};
		\node [style=none] (7) at (0, 0) {$=$};
	\end{pgfonlayer}
	\begin{pgfonlayer}{edgelayer}
		\draw (1.center) to (2.center);
		\draw [style=swap] (4) to (0.center);
		\draw [style=swap] (6) to (5.center);
	\end{pgfonlayer}
\end{tikzpicture}} 
\eeq
\end{definition}

Terminality has a clear operational intuition: performing an operation on a system, and then discarding the resulting system, is the same as discarding that system from the start.

\begin{proposition}\label{prop:termTFAE} 
For a process theory with discarding processes, TFAE:
\bit
\item[(a)] it is terminal, 
\item[(b)] all effects are discarding, and,
\item[(c)] for each system there is only one effect.
\eit
\end{proposition}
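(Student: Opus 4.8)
The plan is to establish the cycle of implications $(a)\Rightarrow(b)\Rightarrow(c)\Rightarrow(a)$, each of which is a short unpacking of the definitions rather than a substantial argument.

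For $(a)\Rightarrow(b)$, I would take an arbitrary effect $e$, that is, a process whose only wire is an input wire, of some system $A$, and instantiate the terminality equation~\eqref{eq:term} at $f=e$. Since $e$ has no output wire, post-composing $e$ with a discarding effect discards nothing and so leaves $e$ unchanged; hence the left-hand side of~\eqref{eq:term} is $e$ itself, while the right-hand side is the discarding effect of $A$. Therefore $e$ equals discarding, i.e.\ every effect is a discarding effect.

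For $(b)\Rightarrow(c)$: by the standing assumption every system $A$ carries at least one effect, namely its discarding effect, and if in addition every effect of $A$ equals that discarding effect then any two effects of $A$ coincide, so there is exactly one effect of $A$. For $(c)\Rightarrow(a)$: given any process $f$ with input system $A$ and output system $B$, post-composing $f$ with the discarding effect of $B$ yields an effect of $A$, and the discarding effect of $A$ is also an effect of $A$; by $(c)$ these two effects are equal, which is precisely~\eqref{eq:term}. This closes the cycle.

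I do not expect a genuine obstacle: the argument is entirely formal. The one point that deserves a word of care is the step in $(a)\Rightarrow(b)$ that post-composing an output-free process with a discarding effect does nothing — equivalently, that discarding the empty list of wires is the empty diagram (the identity on the monoidal unit). This is immediate once one accepts that discarding acts wire-by-wire, and it is exactly what lets the $f=e$ instance of~\eqref{eq:term} collapse to the identity $e = \text{discarding}$.
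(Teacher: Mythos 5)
Your proof is correct; the paper in fact states this proposition without any proof, treating it as a routine unpacking of definitions, and your cycle $(a)\Rightarrow(b)\Rightarrow(c)\Rightarrow(a)$ is exactly the expected argument. You also rightly isolate the only point needing care --- that discarding the trivial (unit) system is the empty diagram, so that the $f=e$ instance of~(\ref{eq:term}) collapses to $e$ equalling the discarding effect --- which is the standard convention in these process theories.
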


\begin{remark}
When taking a process theory to be a symmetric monoidal category, by Proposition \ref{prop:termTFAE} (c) it follows that  `terminality' is a shorthand for `the tensor unit being terminal'. 
\end{remark}

\begin{example}
In the case of probability theory, terminality means stochasticity, which in the particular case of probability distributions means summing up to $1$, i.e.~normalisation. 
\end{example} 

Non-signalling is a bit more involved. The idea behind it is that for two spatially separated parties, say Alice and Bob,  when they share a device (= a process $f$) with two inputs and two outputs,  each having access to one input and one output, then, from one's own input-output pair one should not be able to derive the other party's input. Otherwise, this device would enable the parties to signal to each other  while they are space-like separated, hence violating relativity.   So from one of the party's perspective, say Bob's, who has no access to Alice's output, something that we can represent by discarding that output, Alice's input must not affect Bob's input-output relationship.  That is, from Bob's perspective Alice's  input can be discarded, resulting in some process $h$ between Bob's input and Bob's output.   This leads to one equation with an existential quantifier.  A second equation concern's Alice's perspective on things. 

\begin{definition}\label{def:nonsig}
A two-input-two-output process $f$ is  \em non-signalling \em if we have: 
\[
\exists h\ :  \ \ %
\begin{tikzpicture}
	\begin{pgfonlayer}{nodelayer}
		\node [style=none] (0) at (-3, 0.5) {};
		\node [style=none] (1) at (-3, -1.25) {};
		\node [style=none] (2) at (-3, -0.5) {};
		\node [style=upground] (3) at (-3, 1.25) {};
		\node [style=none] (4) at (1.75, -1.25) {};
		\node [style=upground] (5) at (1.75, -0.25) {};
		\node [style=none] (6) at (0, 0) {$=$};
		\node [style=medium box] (7) at (-2.25, 0) {$f$};
		\node [style=none] (8) at (-1.5, -1.25) {};
		\node [style=none] (9) at (-1.5, -0.5) {};
		\node [style=none] (10) at (-1.5, 0.5) {};
		\node [style=none] (11) at (-1.5, 1.25) {};
		\node [style=small box] (12) at (3.25, 0) {$h$};
		\node [style=none] (13) at (3.25, -1.25) {};
		\node [style=none] (14) at (3.25, -0.5) {};
		\node [style=none] (15) at (3.25, 0.5) {};
		\node [style=none] (16) at (3.25, 1.25) {};
	\end{pgfonlayer}
	\begin{pgfonlayer}{edgelayer}
		\draw (1.center) to (2.center);
		\draw [style=swap] (3) to (0.center);
		\draw [style=swap] (5) to (4.center);
		\draw (8.center) to (9.center);
		\draw (10.center) to (11.center);
		\draw (13.center) to (14.center);
		\draw (15.center) to (16.center);
	\end{pgfonlayer}
\end{tikzpicture}} \qquad \mbox{and}\qquad \exists h'\ :  \ \ %
\begin{tikzpicture}
	\begin{pgfonlayer}{nodelayer}
		\node [style=none] (0) at (-1.5, 0.5) {};
		\node [style=medium box] (1) at (-2.25, 0) {$f$};
		\node [style=none] (2) at (1.75, 0.5) {};
		\node [style=none] (3) at (-3, -0.5) {};
		\node [style=upground] (4) at (3.25, -0.25) {};
		\node [style=none] (5) at (-3, 0.5) {};
		\node [style=none] (6) at (1.75, -0.5) {};
		\node [style=none] (7) at (3.25, -1.25) {};
		\node [style=none] (8) at (-3, -1.25) {};
		\node [style=upground] (9) at (-1.5, 1.25) {};
		\node [style=none] (10) at (-1.5, -0.5) {};
		\node [style=none] (11) at (1.75, -1.25) {};
		\node [style=small box] (12) at (1.75, 0) {$h'$};
		\node [style=none] (13) at (-1.5, -1.25) {};
		\node [style=none] (14) at (0, 0) {$=$};
		\node [style=none] (15) at (1.75, 1.25) {};
		\node [style=none] (16) at (-3, 1.25) {};
	\end{pgfonlayer}
	\begin{pgfonlayer}{edgelayer}
		\draw (8.center) to (3.center);
		\draw [style=swap] (9) to (0.center);
		\draw [style=swap] (4) to (7.center);
		\draw (13.center) to (10.center);
		\draw (5.center) to (16.center);
		\draw (11.center) to (6.center);
		\draw (2.center) to (15.center);
	\end{pgfonlayer}
\end{tikzpicture}} 
\]
\end{definition}

However,  it is a lot more delicate to say that a process theory is non-signalling.  Simply saying `a process theory is non-signalling if all its processes are' doesn't work.  For example, most process theories would include the swap process:
\[
\begin{tikzpicture}
	\begin{pgfonlayer}{nodelayer}
		\node [style=none] (0) at (0.75, -0.75) {};
		\node [style=none] (1) at (-0.75, 0.75) {};
		\node [style=none] (2) at (0.75, 0.75) {};
		\node [style=none] (3) at (-0.75, -0.75) {};
	\end{pgfonlayer}
	\begin{pgfonlayer}{edgelayer}
		\draw [style=swap, in=90, out=-90, looseness=1.00] (1) to (0.center);
		\draw [style=swap, in=90, out=-90, looseness=1.00] (2.center) to (3.center);
	\end{pgfonlayer}
\end{tikzpicture}} 
\]
which evidently violates non-signalling.  The obvious reason being that this process consists of two clear  signalling channels.  The kind of thing non-signalling aims to forbid is that without having such explicit signalling channels, signalling should not be possible.  This is exactly the kind of thing that can be said in terms of a causal structure (see below), so we will need to reconsider the notion of non-signalling in the context of causal structure, in order to make sense of a non-signalling process theory.

\begin{remark}  
One could argue that there is no reason why in the case of non-signalling the discarding effect should be unique, something which we implicitly assumed in our notation.  One could indeed weaken the definition of non-signalling for a process $f$ to: 
\[
\exists e, h\ :  \ \ %
\begin{tikzpicture}
	\begin{pgfonlayer}{nodelayer}
		\node [style=none] (0) at (-3, 0.5) {};
		\node [style=none] (1) at (-3, -1.25) {};
		\node [style=none] (2) at (-3, -0.5) {};
		\node [style=upground] (3) at (-3, 1.25) {};
		\node [style=none] (4) at (1.75, -1.25) {};
		\node [style=copoint] (5) at (1.75, -0.25) {$e$};
		\node [style=none] (6) at (0, 0) {$=$};
		\node [style=medium box] (7) at (-2.25, 0) {$f$};
		\node [style=none] (8) at (-1.5, -1.25) {};
		\node [style=none] (9) at (-1.5, -0.5) {};
		\node [style=none] (10) at (-1.5, 0.5) {};
		\node [style=none] (11) at (-1.5, 1.25) {};
		\node [style=small box] (12) at (3.25, 0) {$h$};
		\node [style=none] (13) at (3.25, -1.25) {};
		\node [style=none] (14) at (3.25, -0.5) {};
		\node [style=none] (15) at (3.25, 0.5) {};
		\node [style=none] (16) at (3.25, 1.25) {};
	\end{pgfonlayer}
	\begin{pgfonlayer}{edgelayer}
		\draw (1.center) to (2.center);
		\draw [style=swap] (3) to (0.center);
		\draw [style=swap] (5) to (4.center);
		\draw (8.center) to (9.center);
		\draw (10.center) to (11.center);
		\draw (13.center) to (14.center);
		\draw (15.center) to (16.center);
	\end{pgfonlayer}
\end{tikzpicture}} \qquad \mbox{and}\qquad \exists h', e'\ :  \ \ %
\begin{tikzpicture}
	\begin{pgfonlayer}{nodelayer}
		\node [style=none] (0) at (-1.5, 0.5) {};
		\node [style=medium box] (1) at (-2.25, 0) {$f$};
		\node [style=none] (2) at (1.75, 0.5) {};
		\node [style=none] (3) at (-3, -0.5) {};
		\node [style=copoint] (4) at (3.25, -0.25) {$e'$};
		\node [style=none] (5) at (-3, 0.5) {};
		\node [style=none] (6) at (1.75, -0.5) {};
		\node [style=none] (7) at (3.25, -1.25) {};
		\node [style=none] (8) at (-3, -1.25) {};
		\node [style=upground] (9) at (-1.5, 1.25) {};
		\node [style=none] (10) at (-1.5, -0.5) {};
		\node [style=none] (11) at (1.75, -1.25) {};
		\node [style=small box] (12) at (1.75, 0) {$h'$};
		\node [style=none] (13) at (-1.5, -1.25) {};
		\node [style=none] (14) at (0, 0) {$=$};
		\node [style=none] (15) at (1.75, 1.25) {};
		\node [style=none] (16) at (-3, 1.25) {};
	\end{pgfonlayer}
	\begin{pgfonlayer}{edgelayer}
		\draw (8.center) to (3.center);
		\draw [style=swap] (9) to (0.center);
		\draw [style=swap] (4) to (7.center);
		\draw (13.center) to (10.center);
		\draw (5.center) to (16.center);
		\draw (11.center) to (6.center);
		\draw (2.center) to (15.center);
	\end{pgfonlayer}
\end{tikzpicture}}  
\]
This would not affect the main claims in this paper, in that we would still be able to derive non-signalling from terminality, and hence, that terminality should be taken to be the more fundamental notion.  
\end{remark}

 
\section{Process theories with causal structure} 
 
A \em causal structure \em is a partial ordering, which we can represent by the corresponding  Hasse-diagram.  For example, if we have $\bot < a, b < \top$ then this yields the diagram:
\beq\label{eq:causstruc}
\begin{tikzpicture}
	\begin{pgfonlayer}{nodelayer}
		\node [style=causal dot] (0) at (2, 0) {};
		\node [style=causal dot] (1) at (-2, 0) {};
		\node [style=causal dot] (2) at (0, -2) {};
		\node [style=causal dot] (3) at (0, 2) {};
		\node [style=none] (4) at (-1.5, -2.5) {};
		\node [style=left label] (5) at (-1.75, -2.5) {$\bot$};
		\node [style=none] (6) at (-0.25, -2.25) {};
		\node [style=none] (7) at (0.25, 2.25) {};
		\node [style=right label] (8) at (1.75, 2.5) {$\top$};
		\node [style=none] (9) at (1.5, 2.5) {};
		\node [style=right label] (10) at (3.75, 0.5) {$b$};
		\node [style=none] (11) at (2.25, 0.25) {};
		\node [style=none] (12) at (3.5, 0.5) {};
		\node [style=left label] (13) at (-3.75, -0.5) {$a$};
		\node [style=none] (14) at (-3.5, -0.5) {};
		\node [style=none] (15) at (-2.25, -0.25) {};
	\end{pgfonlayer}
	\begin{pgfonlayer}{edgelayer}
		\draw [style=causal edge] (2) to (0);
		\draw [style=causal edge] (2) to (1);
		\draw [style=causal edge] (1) to (3);
		\draw [style=causal edge] (0) to (3);
		\draw [style=diredge, in=-150, out=0, looseness=0.75] (4.center) to (6.center);
		\draw [style=diredge, in=30, out=180, looseness=0.75] (9.center) to (7.center);
		\draw [style=diredge, in=30, out=180, looseness=0.75] (12.center) to (11.center);
		\draw [style=diredge, in=-150, out=0, looseness=0.75] (14.center) to (15.center);
	\end{pgfonlayer}
\end{tikzpicture}}
\eeq
Here, $\bot$ can signal to $a$ and $b$, which themselves can signal to $\top$, and by transitivity, $\bot$ can also signal to $\top$.  But $a$ cannot signal to $b$ and vice versa.   

We can use such a causal structure as a support for a diagram within a process theory in the following manner, which (more or less) generalises  how \em causal networks \em are defined by Pearl in \cite{PearlBook}:\footnote{We say `more or less' since for the purposes of this paper we modify things a bit as compared to \cite{PearlBook}.}  
\bit
\item processes are positioned on nodes of the causal structure,
\item wires follow the edges, from outputs to inputs, and
\item we allow open inputs and open outputs at nodes.\footnote{It is here that we modify Pearl's approach by allowing `open inputs and outputs'.}
\eit
With respect to the  causal structure (\ref{eq:causstruc}), we could for example have:
\beq\label{eq:causproc}
\begin{tikzpicture}
	\begin{pgfonlayer}{nodelayer}
		\node [style=none] (0) at (3, -1) {};
		\node [style=none] (1) at (3, -0.5) {};
		\node [style=none] (2) at (1.5, -0.5) {};
		\node [style=none] (3) at (-0.75, -1.75) {};
		\node [style=none] (4) at (0.75, -1.75) {};
		\node [style=none] (5) at (-1.5, -0.5) {};
		\node [style=medium box] (6) at (0, -2.25) {$f_\bot$};
		\node [style=medium box] (7) at (0, 2.25) {$f_\top$};
		\node [style=none] (8) at (1.5, 0.5) {};
		\node [style=none] (9) at (0.75, 1.75) {};
		\node [style=medium box] (10) at (-2.25, 0) {$f_a$};
		\node [style=medium box] (11) at (2.25, 0) {$f_b$};
		\node [style=none] (12) at (-0.75, 1.75) {};
		\node [style=none] (13) at (-1.5, 0.5) {};
		\node [style=none] (14) at (3, 1) {};
		\node [style=none] (15) at (3, 0.5) {};
		\node [style=none] (16) at (-3, -0.5) {};
		\node [style=none] (17) at (-3, -1) {};
		\node [style=none] (18) at (-3, 1) {};
		\node [style=none] (19) at (-3, 0.5) {};
	\end{pgfonlayer}
	\begin{pgfonlayer}{edgelayer}
		\draw (0.center) to (1.center);
		\draw [in=-90, out=90, looseness=1.00] (4.center) to (2.center);
		\draw [in=-90, out=90, looseness=1.00] (3.center) to (5.center);
		\draw [in=-90, out=90, looseness=1.00] (8.center) to (9.center);
		\draw [in=-90, out=90, looseness=1.00] (13.center) to (12.center);
		\draw (15.center) to (14.center);
		\draw (17.center) to (16.center);
		\draw (19.center) to (18.center);
	\end{pgfonlayer}
\end{tikzpicture}}
\eeq
where we allowed at $a$ and $b$ for there to be open inputs and outputs.

\begin{remark}
The manner in which a diagram of processes carries a causal structure was also considered in \cite{CRCaucat}.  In brief, each diagram already carries a shadow of a causal structure in that sequential composition can be interpreted as `after', i.e.~time-like separated, and that parallel composition can be interpreted as `while', i.e.~\em possibly \em space-like. So the only required additional specification is to state which processes are \em genuinely \em space-like separated.  We refer the reader to  \cite{CRCaucat} for more details.
\end{remark}

The type of scenario that we considered when defining non-signalling involves two parties that are not supposed to be able to have signalling channels to each other,  just like $a$ and $b$ in (\ref{eq:causstruc}).\footnote{So in the context of  \cite{CRCaucat}, these two parties have to be genuinely space-like separated.}
That Alice and Bob each have an input and an output is then made explicit by the \em causal process network \em  (\ref{eq:causproc}).  

In fact, this is the most general situation that we need to consider for the purpose of evaluating non-signalling. Indeed, while one could of course have a causal structure like the following ones:
\[
\begin{tikzpicture}
	\begin{pgfonlayer}{nodelayer}
		\node [style=causal dot] (0) at (8.5, 0) {};
		\node [style=causal dot] (1) at (4.5, 0) {};
		\node [style=causal dot] (2) at (6.5, -2) {};
		\node [style=causal dot] (3) at (6.5, 2) {};
		\node [style=causal dot] (4) at (-8.5, 0) {};
		\node [style=causal dot] (5) at (-6.5, 2) {};
		\node [style=causal dot] (6) at (-4.5, 0) {};
		\node [style=causal dot] (7) at (-6.5, -2) {};
		\node [style=causal dot] (8) at (-6.5, -3) {};
		\node [style=causal dot] (9) at (-6.5, -4) {};
		\node [style=causal dot] (10) at (-6.5, 3) {};
		\node [style=causal dot] (11) at (-6.5, 4) {};
		\node [style=causal dot] (12) at (9.5, 1) {};
		\node [style=causal dot] (13) at (9.5, -1) {};
		\node [style=causal dot] (14) at (3.5, 1) {};
		\node [style=causal dot] (15) at (3.5, -1) {};
		\node [style=none] (16) at (-9.5, 0) {};
	\end{pgfonlayer}
	\begin{pgfonlayer}{edgelayer}
		\draw [style=causal edge] (2) to (0);
		\draw [style=causal edge] (2) to (1);
		\draw [style=causal edge] (1) to (3);
		\draw [style=causal edge] (0) to (3);
		\draw [style=causal edge] (7) to (6);
		\draw [style=causal edge] (7) to (4);
		\draw [style=causal edge] (4) to (5);
		\draw [style=causal edge] (6) to (5);
		\draw [style=causal edge] (8) to (4);
		\draw [style=causal edge] (8) to (6);
		\draw [style=causal edge] (8) to (7);
		\draw [style=causal edge] (9) to (6);
		\draw [style=causal edge] (9) to (4);
		\draw [style=causal edge] (10) to (11);
		\draw [style=causal edge] (6) to (10);
		\draw [style=causal edge] (6) to (11);
		\draw [style=causal edge] (4) to (11);
		\draw [style=causal edge] (4) to (10);
		\draw [style=causal edge] (1) to (14);
		\draw [style=causal edge] (15) to (1);
		\draw [style=causal edge] (13) to (0);
		\draw [style=causal edge] (0) to (12);
	\end{pgfonlayer}
\end{tikzpicture}}
\]
these are still covered by the diamond-shaped process network simply by treating clusters of nodes as single processes:
\[
\begin{tikzpicture}
	\begin{pgfonlayer}{nodelayer}
		\node [style=causal dot] (0) at (8.5, 0) {};
		\node [style=causal dot] (1) at (4.5, 0) {};
		\node [style=causal dot] (2) at (6.5, -2) {};
		\node [style=causal dot] (3) at (6.5, 2) {};
		\node [style=causal dot] (4) at (-8.5, 0) {};
		\node [style=causal dot] (5) at (-6.5, 2) {};
		\node [style=causal dot] (6) at (-4.5, 0) {};
		\node [style=causal dot] (7) at (-6.5, -2) {};
		\node [style=causal dot] (8) at (-6.5, -3) {};
		\node [style=causal dot] (9) at (-6.5, -4) {};
		\node [style=causal dot] (10) at (-6.5, 3) {};
		\node [style=causal dot] (11) at (-6.5, 4) {};
		\node [style=causal dot] (12) at (9.5, 1) {};
		\node [style=causal dot] (13) at (9.5, -1) {};
		\node [style=causal dot] (14) at (3.5, 1) {};
		\node [style=causal dot] (15) at (3.5, -1) {};
		\node [style=none] (16) at (-9.5, 0) {};
		\node [style=none] (17) at (3.5, 1.75) {};
		\node [style=none] (18) at (5.25, 0) {};
		\node [style=none] (19) at (2.75, 0) {};
		\node [style=none] (20) at (3.5, 1.75) {};
		\node [style=none] (21) at (3.5, -1.75) {};
		\node [style=none] (22) at (2.75, 0) {};
		\node [style=none] (23) at (5.25, 0) {};
		\node [style=none] (24) at (3.5, -1.75) {};
		\node [style=none] (25) at (-6.5, 1.25) {};
		\node [style=none] (26) at (-6.5, 4.75) {};
		\node [style=none] (27) at (-7.25, 3) {};
		\node [style=none] (28) at (-5.75, 3) {};
		\node [style=none] (29) at (-6.5, 1.25) {};
		\node [style=none] (30) at (-5.75, 3) {};
		\node [style=none] (31) at (-6.5, 4.75) {};
		\node [style=none] (32) at (-7.25, 3) {};
		\node [style=none] (33) at (-6.5, -4.75) {};
		\node [style=none] (34) at (-6.5, -4.75) {};
		\node [style=none] (35) at (-6.5, -1.25) {};
		\node [style=none] (36) at (-7.25, -3) {};
		\node [style=none] (37) at (-5.75, -3) {};
		\node [style=none] (38) at (-7.25, -3) {};
		\node [style=none] (39) at (-5.75, -3) {};
		\node [style=none] (40) at (-6.5, -1.25) {};
		\node [style=none] (41) at (9.5, -1.75) {};
		\node [style=none] (42) at (9.5, 1.75) {};
		\node [style=none] (43) at (10.25, 0) {};
		\node [style=none] (44) at (7.75, 0) {};
		\node [style=none] (45) at (9.5, -1.75) {};
		\node [style=none] (46) at (7.75, 0) {};
		\node [style=none] (47) at (9.5, 1.75) {};
		\node [style=none] (48) at (10.25, 0) {};
	\end{pgfonlayer}
	\begin{pgfonlayer}{edgelayer}
		\draw [style=causal edge] (2) to (0);
		\draw [style=causal edge] (2) to (1);
		\draw [style=causal edge] (1) to (3);
		\draw [style=causal edge] (0) to (3);
		\draw [style=causal edge] (7) to (6);
		\draw [style=causal edge] (7) to (4);
		\draw [style=causal edge] (4) to (5);
		\draw [style=causal edge] (6) to (5);
		\draw [style=causal edge] (8) to (4);
		\draw [style=causal edge] (8) to (6);
		\draw [style=causal edge] (8) to (7);
		\draw [style=causal edge] (9) to (6);
		\draw [style=causal edge] (9) to (4);
		\draw [style=causal edge] (10) to (11);
		\draw [style=causal edge] (6) to (10);
		\draw [style=causal edge] (6) to (11);
		\draw [style=causal edge] (4) to (11);
		\draw [style=causal edge] (4) to (10);
		\draw [style=causal edge] (1) to (14);
		\draw [style=causal edge] (15) to (1);
		\draw [style=causal edge] (13) to (0);
		\draw [style=causal edge] (0) to (12);
		\draw [style=dashed, in=90, out=0, looseness=0.75] (17.center) to (18.center);
		\draw [style=dashed, in=90, out=180, looseness=0.75] (20.center) to (19.center);
		\draw [style=dashed, in=-90, out=0, looseness=0.75] (24.center) to (23.center);
		\draw [style=dashed, in=-90, out=180, looseness=0.75] (21.center) to (22.center);
		\draw [style=dashed, in=90, out=0, looseness=0.75] (31.center) to (28.center);
		\draw [style=dashed, in=90, out=180, looseness=0.75] (26.center) to (27.center);
		\draw [style=dashed, in=-90, out=0, looseness=0.75] (25.center) to (30.center);
		\draw [style=dashed, in=-90, out=180, looseness=0.75] (29.center) to (32.center);
		\draw [style=dashed, in=90, out=0, looseness=0.75] (40.center) to (39.center);
		\draw [style=dashed, in=90, out=180, looseness=0.75] (35.center) to (38.center);
		\draw [style=dashed, in=-90, out=0, looseness=0.75] (34.center) to (37.center);
		\draw [style=dashed, in=-90, out=180, looseness=0.75] (33.center) to (36.center);
		\draw [style=dashed, in=90, out=180, looseness=0.75] (47.center) to (44.center);
		\draw [style=dashed, in=90, out=0, looseness=0.75] (42.center) to (43.center);
		\draw [style=dashed, in=-90, out=180, looseness=0.75] (41.center) to (46.center);
		\draw [style=dashed, in=-90, out=0, looseness=0.75] (45.center) to (48.center);
	\end{pgfonlayer}
\end{tikzpicture}}
\]

\begin{definition}
A process theory is non-signalling if all  processes of the form (\ref{eq:causproc}) are.
\end{definition}

\begin{example}[two-sorted process theories] 
When one is concerned with non-signalling, then typically, there are two kinds of systems involved. One of them would correspond to probabilities, while the other one could involve some  exotic physical systems, which may be existing e.g.~quantum theory, or hypothetical e.g.~Barrett's box world \cite{Barrett} which enables one to realise correlations between systems way beyond quantum correlations, and in fact, is extremal as far as non-signalling is concerned.  One treats the inputs and outputs at Alice's and Bob's ends as probabilities, but the `internal wiring' of the box as in Definition \ref{def:nonsig} may involve the exotic systems.  The causal process network (\ref{eq:causproc}) would then become:
\ctikzfig{diamondb}
where the bold wires and boxes mean `exotic' while the normal wires mean `probability'.  
\end{example}

\begin{example}[local hidden variables] 
The notion of a local  hidden variable theory means that such a two-sorted diagram can be replaced by a one-sorted one, only involving normal wires, and bold then standing for `quantum'.  As we shall see below in the proof of Theorem \ref{thm:main1}, assuming terminality, which one does in quantum theory, the diamond shape immediately becomes a V-shape, and the  definition for a local hidden variable representation then becomes, given a quantum scenario involving $f_a, f_b, f_\bot$:
\[
\exists h_a, h_b, h_\bot \ \ :\ \ %
\begin{tikzpicture}
	\begin{pgfonlayer}{nodelayer}
		\node [style=none] (0) at (3, -1) {};
		\node [style=none] (1) at (3, -0.5) {};
		\node [style=none] (2) at (1.5, -0.5) {};
		\node [style=none] (3) at (-0.75, -1.75) {};
		\node [style=none] (4) at (0.75, -1.75) {};
		\node [style=none] (5) at (-1.5, -0.5) {};
		\node [style=medium box] (6) at (0, -2.25) {$h_\bot$};
		\node [style=medium box] (7) at (-2.25, 0) {$h_a$};
		\node [style=medium box] (8) at (2.25, 0) {$h_b$};
		\node [style=none] (9) at (3, 1) {};
		\node [style=none] (10) at (3, 0.5) {};
		\node [style=none] (11) at (-3, -0.5) {};
		\node [style=none] (12) at (-3, -1) {};
		\node [style=none] (13) at (-3, 1) {};
		\node [style=none] (14) at (-3, 0.5) {};
	\end{pgfonlayer}
	\begin{pgfonlayer}{edgelayer}
		\draw (0.center) to (1.center);
		\draw [in=-90, out=90, looseness=1.00] (4.center) to (2.center);
		\draw [in=-90, out=90, looseness=1.00] (3.center) to (5.center);
		\draw (10.center) to (9.center);
		\draw (12.center) to (11.center);
		\draw (14.center) to (13.center);
	\end{pgfonlayer}
\end{tikzpicture}} \ \ = \ \ %
\begin{tikzpicture}
	\begin{pgfonlayer}{nodelayer}
		\node [style=none] (0) at (3, -1) {};
		\node [style=none] (1) at (3, -0.5) {};
		\node [style=none] (2) at (1.5, -0.5) {};
		\node [style=none] (3) at (-0.75, -1.75) {};
		\node [style=none] (4) at (0.75, -1.75) {};
		\node [style=none] (5) at (-1.5, -0.5) {};
		\node [style=medium box bold] (6) at (0, -2.25) {$f_\bot$};
		\node [style=medium box bold] (7) at (-2.25, 0) {$f_a$};
		\node [style=medium box bold] (8) at (2.25, 0) {$f_b$};
		\node [style=none] (9) at (3, 1) {};
		\node [style=none] (10) at (3, 0.5) {};
		\node [style=none] (11) at (-3, -0.5) {};
		\node [style=none] (12) at (-3, -1) {};
		\node [style=none] (13) at (-3, 1) {};
		\node [style=none] (14) at (-3, 0.5) {};
	\end{pgfonlayer}
	\begin{pgfonlayer}{edgelayer}
		\draw (0.center) to (1.center);
		\draw [boldedge, in=-90, out=90, looseness=1.00] (4.center) to (2.center);
		\draw [boldedge, in=-90, out=90, looseness=1.00] (3.center) to (5.center);
		\draw (10.center) to (9.center);
		\draw (12.center) to (11.center);
		\draw (14.center) to (13.center);
	\end{pgfonlayer}
\end{tikzpicture}}
\]
 \end{example}

 \section{Main result} 


Let us first summarise what we have established so far.  We want to compare the notions of terminality and non-signalling for process theories, but noted that non-signalling of a process of the form:
\[
\begin{tikzpicture}
	\begin{pgfonlayer}{nodelayer}
		\node [style=none] (0) at (0.75, 1) {};
		\node [style=none] (1) at (-0.75, -0.5) {};
		\node [style=none] (2) at (-0.75, 1) {};
		\node [style=medium box] (3) at (0, 0) {$f$};
		\node [style=none] (4) at (-0.75, -1) {};
		\node [style=none] (5) at (-0.75, 0.5) {};
		\node [style=none] (6) at (0.75, 0.5) {};
		\node [style=none] (7) at (0.75, -0.5) {};
		\node [style=none] (8) at (0.75, -1) {};
	\end{pgfonlayer}
	\begin{pgfonlayer}{edgelayer}
		\draw (4.center) to (1.center);
		\draw (5.center) to (2.center);
		\draw [style=swap] (0) to (6.center);
		\draw [style=swap] (7.center) to (8.center);
	\end{pgfonlayer}
\end{tikzpicture}}
\]
already requires explicit internal causal structure in order to define what it means for a process theory to be non-signalling, so that we can exclude internal signalling channels  such as e.g.:  
\[
\begin{tikzpicture}
	\begin{pgfonlayer}{nodelayer}
		\node [style=none] (0) at (-5.75, 1.5) {};
		\node [style=none] (1) at (-7.25, -1.5) {};
		\node [style=none] (2) at (-7.25, 1.5) {};
		\node [style=none] (3) at (-5.75, -1.5) {};
		\node [style=none] (4) at (-5.75, 0.75) {};
		\node [style=small box] (5) at (-7.25, 0.5) {};
		\node [style=upground] (6) at (-5.75, -0.75) {};
		\node [style=none] (7) at (-7.25, -1.25) {};
		\node [style=none] (8) at (-5, 1.25) {};
		\node [style=none] (9) at (-8, 1.25) {};
		\node [style=none] (10) at (-8, -1.25) {};
		\node [style=none] (11) at (-5, -1.25) {};
		\node [style=none] (12) at (8, -1.25) {};
		\node [style=none] (13) at (7.25, 1.5) {};
		\node [style=none] (14) at (5, -1.25) {};
		\node [style=none] (15) at (8, 1.25) {};
		\node [style=none] (16) at (5.75, -1) {};
		\node [style=none] (17) at (5, 1.25) {};
		\node [style=none] (18) at (7.25, 1) {};
		\node [style=none] (19) at (5.75, -1.5) {};
		\node [style=none] (20) at (7.25, -1.5) {};
		\node [style=none] (21) at (5.75, 1.5) {};
		\node [style=none] (22) at (7.25, -1) {};
		\node [style=none] (23) at (5.75, 1) {};
	\end{pgfonlayer}
	\begin{pgfonlayer}{edgelayer}
		\draw (1.center) to (7.center);
		\draw (5) to (2.center);
		\draw [style=swap] (0.center) to (4.center);
		\draw [style=swap] (6) to (3.center);
		\draw [style=swap, in=-90, out=90, looseness=0.75] (7.center) to (4.center);
		\draw [style=dashed] (9.center) to (8.center);
		\draw [style=dashed] (8.center) to (11.center);
		\draw [style=dashed] (11.center) to (10.center);
		\draw [style=dashed] (9.center) to (10.center);
		\draw (19.center) to (16.center);
		\draw [style=swap] (13.center) to (18.center);
		\draw [style=swap, in=-90, out=90, looseness=0.75] (16.center) to (18.center);
		\draw [style=dashed] (17.center) to (15.center);
		\draw [style=dashed] (15.center) to (12.center);
		\draw [style=dashed] (12.center) to (14.center);
		\draw [style=dashed] (17.center) to (14.center);
		\draw (20.center) to (22.center);
		\draw [style=swap] (21.center) to (23.center);
		\draw [style=swap, in=-90, out=90, looseness=0.75] (22.center) to (23.center);
	\end{pgfonlayer}
\end{tikzpicture}}
\]
We concluded that the internal causal structure should be a diamond so that  $f$ should be of the form:
\[
\begin{tikzpicture}
	\begin{pgfonlayer}{nodelayer}
		\node [style=none] (0) at (3, -3.25) {};
		\node [style=none] (1) at (3, -0.5) {};
		\node [style=none] (2) at (1.5, -0.5) {};
		\node [style=none] (3) at (-0.75, -1.75) {};
		\node [style=none] (4) at (0.75, -1.75) {};
		\node [style=none] (5) at (-1.5, -0.5) {};
		\node [style=medium box] (6) at (0, -2.25) {$f_\bot$};
		\node [style=medium box] (7) at (0, 2.25) {$f_\top$};
		\node [style=none] (8) at (1.5, 0.5) {};
		\node [style=none] (9) at (0.75, 1.75) {};
		\node [style=medium box] (10) at (-2.25, 0) {$f_a$};
		\node [style=medium box] (11) at (2.25, 0) {$f_b$};
		\node [style=none] (12) at (-0.75, 1.75) {};
		\node [style=none] (13) at (-1.5, 0.5) {};
		\node [style=none] (14) at (3, 3.25) {};
		\node [style=none] (15) at (3, 0.5) {};
		\node [style=none] (16) at (-3, -0.5) {};
		\node [style=none] (17) at (-3, -3.25) {};
		\node [style=none] (18) at (-3, 3.25) {};
		\node [style=none] (19) at (-3, 0.5) {}; 
		\node [style=none] (20) at (3.5, 3) {};
		\node [style=none] (21) at (3.5, -3) {};
		\node [style=none] (22) at (-3.5, 3) {};
		\node [style=none] (23) at (-3.5, -3) {};
	\end{pgfonlayer}
	\begin{pgfonlayer}{edgelayer}
		\draw (0.center) to (1.center);
		\draw [in=-90, out=90, looseness=1.00] (4.center) to (2.center);
		\draw [in=-90, out=90, looseness=1.00] (3.center) to (5.center);
		\draw [in=-90, out=90, looseness=1.00] (8.center) to (9.center);
		\draw [in=-90, out=90, looseness=1.00] (13.center) to (12.center);
		\draw (15.center) to (14.center);
		\draw (17.center) to (16.center);
		\draw (19.center) to (18.center);
		\draw [style=dashed] (22.center) to (20.center);
		\draw [style=dashed] (20.center) to (21.center);
		\draw [style=dashed] (21.center) to (23.center);
		\draw [style=dashed] (22.center) to (23.center);
	\end{pgfonlayer}
\end{tikzpicture}}
\]
This now allows for a definition of a non-signalling process theory. 
 
\begin{theorem}\label{thm:main1}
If a process theory is terminal then it is non-signalling.
\end{theorem}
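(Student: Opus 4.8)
The plan is to read off non-signalling directly from the diamond-shaped causal process network, invoking terminality only to collapse those nodes all of whose outputs end up discarded. Fix a process $f$ of the form (\ref{eq:causproc}), with $f_\bot$ sitting on $\bot$, $f_a$ on $a$, $f_b$ on $b$, and $f_\top$ on $\top$, and with the only open inputs and outputs being Alice's (at $a$) and Bob's (at $b$). Since $\bot$ is the bottom of the ordering and $\top$ the top, $f_\bot$ carries no inputs (it is a state) and $f_\top$ carries no outputs (it is an effect). By the reduction already argued before the theorem statement — arbitrary causal structures leaving Alice and Bob separate reduce to the diamond by clustering nodes — it suffices to establish the two equations of Definition \ref{def:nonsig} for this $f$.

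The first step is to note that $f_\top$ is forced to be a \emph{discarding} effect: by Proposition \ref{prop:termTFAE}, terminality makes every effect equal to discarding, so $f_\top$ is the discarding effect on the wires feeding into $\top$, which (discarding of a composite being the parallel composite of the discardings of its parts) means each wire running from $a$ up to $\top$ and from $b$ up to $\top$ is simply discarded. Thus, after this rewrite, $f$ is: the state $f_\bot$, then $f_a$ in parallel with $f_b$, then discarding the wires that used to go up to $\top$. (This is exactly the passage from the diamond to a $V$-shape noted earlier.)

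For the first equation of Definition \ref{def:nonsig}, representing Bob's perspective, I would post-compose $f$ with discarding on Alice's output wire. Now \emph{both} output wires of $f_a$ are discarded — the one to $\top$ (from the first step) and Alice's open output (just now) — so terminality applied to $f_a$ replaces that entire branch by discarding on both of $f_a$'s input wires, namely the wire from $\bot$ and Alice's open input. Discarding the $\bot$-to-$a$ wire then converts $f_\bot$ into a state $\rho$ on its remaining output wire, the one from $\bot$ to $b$. What survives is: Alice's input wire discarded, in parallel with the composite $h$ given by $\rho$, then $f_b$, then discarding the wire that went up to $\top$; this $h$ is a process from Bob's input to Bob's output, which is precisely the witness demanded. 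The second equation (Alice's perspective) follows by the mirror-image argument, discarding Bob's output and collapsing $f_b$ and the $\bot$-to-$b$ wire to obtain $h'$.

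I do not expect a genuine obstacle: the mathematical content is the observation that, \emph{once causal structure is made explicit}, non-signalling of the diamond is an immediate consequence of terminality, used exactly twice per side — once to turn $f_\top$ into discarding, and once to collapse the $f_a$ (resp.\ $f_b$) branch whose outputs have all been discarded. The only points requiring minor care are bookkeeping: that $f_\bot$ is indeed a state and $f_\top$ indeed an effect (forced by $\bot,\top$ being extremal), that wire multiplicities between nodes cause no difficulty since discarding a bundle of wires is the parallel composite of the individual discardings, and that converting $f_\bot$ to $\rho$ uses only composition, not terminality.
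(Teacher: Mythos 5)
Your proposal is correct and follows essentially the same route as the paper: terminality forces the effect $f_\top$ to be discarding, collapsing the diamond to a V-shape, and then discarding Alice's output lets terminality collapse the whole $f_a$ branch, with $\rho$ plugged into $f_b$ serving as the witness $h$ (and symmetrically for $h'$). The extra bookkeeping you flag (that $f_\bot$ is a state, $f_\top$ an effect, and that discarding a bundle of wires factors as parallel discardings) is implicit in the paper's diagrams but handled identically.
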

\begin{proof}
Assuming terminality, we have for $f_\top$  in (\ref{eq:causproc}):
\ctikzfig{xxx}
so (\ref{eq:causproc}) becomes:
\ctikzfig{diamondr1}
and applying terminality to $f_a$ the equational requirement  for non-signalling is now indeed obeyed: 
 \ctikzfig{diamondr2}
 where the dashed box identifies $h$ as in Definition \ref{def:nonsig}. 
\end{proof}

To prove the converse, we need one extra assumption.  What is a process with no inputs and no outputs?  It interacts with nothing, so it should be independent of anything else that happens. 

\begin{definition}
By (!) we mean that there is a unique diagram with no inputs nor outputs, the empty one.  
\end{definition} 

\begin{remark}
The justification for there only being one box with no input and no output is that it represents \em certainty\em, and in this paper all processes are conceived as happening with certainty.  For example, in the context of quantum theory this means that measurements are considered `as a whole', that is, taking into account all possible outcomes together. This is different from, for example, a diagram for teleportation such as \cite{Kindergarten}: 
 \ctikzfig{tele}
where a particular measurement outcome is considered that only happen with a probability ${1\over 4}$.
\end{remark}

\begin{theorem}
Assuming (!), if a process theory is non-signalling then it is terminal. 
\end{theorem}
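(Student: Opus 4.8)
The plan is to reduce \emph{terminality} to the statement that every effect is a discarding effect --- which is legitimate by Proposition~\ref{prop:termTFAE}, since (a) and (b) there are equivalent --- and then to obtain that statement from non-signalling applied to a suitably degenerate causal process network of the form~(\ref{eq:causproc}).

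First I would fix an arbitrary system $A$ and an arbitrary effect $e\colon A\to I$ on it, and assemble a process $f$ of the form~(\ref{eq:causproc}) out of $e$: place $e$ at Alice's node, so that Alice's open input is $A$, there is no open output at that node and no wire from it to $\top$, and let the processes at $\bot$, at Bob's node and at $\top$, together with Bob's open input and open output wires, all be the empty diagram. Read as a two-input--two-output box, this diamond-shaped network is simply $e$ itself, with Alice's output wire and both of Bob's wires carrying the trivial system. The one thing that needs a word of justification here is that this really counts as a process ``of the form~(\ref{eq:causproc})''; but Section~4 explicitly permits open inputs and outputs and imposes no non-triviality constraint on the processes sitting at the nodes, so padding a bare effect with empty diagrams into the diamond shape is allowed.

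Next I would invoke non-signalling of $f$. By Definition~\ref{def:nonsig}, from Bob's perspective there exists an $h$ such that discarding Alice's output from $f$ equals $h$ precomposed with discarding of Alice's input. In the present situation Alice's output is the trivial system, so discarding it does nothing and the left-hand side is exactly $e$; and $h$ is a process with no inputs and no outputs, so by $(!)$ it is the empty diagram, which composes trivially. Hence $e$ equals the discarding effect on $A$. As $A$ and $e$ were arbitrary, every effect is a discarding effect, and Proposition~\ref{prop:termTFAE} then gives that the process theory is terminal. (The second non-signalling equation, Alice's perspective, is automatically satisfiable in this degenerate configuration and contributes nothing; alternatively one could conclude via clause (c) by noting that any two effects on $A$ are both forced to equal discarding, hence equal.)

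The only genuine obstacle is the bookkeeping question in the second paragraph --- pinning down that the degenerate diamond legitimately belongs to the class over which ``non-signalling process theory'' quantifies --- together with the role of $(!)$: it is precisely $(!)$ that forces the existentially quantified auxiliary process $h$ to be trivial, so that ``$e$ factors through discarding'' strengthens to ``$e$ \emph{is} discarding''. Without $(!)$ one would only get the weaker factorisation. Everything else is a direct unpacking of the definitions, requiring no further categorical machinery.
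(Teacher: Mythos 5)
Your proposal is correct and is essentially the paper's own argument: the paper likewise specialises Definition~\ref{def:nonsig} by taking one party's input and output to be trivial and then uses $(!)$ to force the existentially quantified $h$ to be the empty diagram. The only (inessential) difference is that you further trivialise Alice's output so as to land on the statement ``every effect is discarding'' and then invoke Proposition~\ref{prop:termTFAE}, whereas the paper keeps both of Alice's wires and reads off Equation~(\ref{eq:term}) directly for an arbitrary process.
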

\begin{proof}
Equation (\ref{eq:term})   follows from non-signalling of processes, simply by taking the input and the output  of one of the parties in Definition \ref{def:nonsig} to be trivial (i.e.~no input nor output). Since $h$ now has no inputs nor outputs it is the empty diagram by (!).   
\end{proof}

\section{Discussion}

So we achieved our goal and established equivalence of terminality and non-signalling for two parties, where we conceive the fact that terminalty implies non-signalling as the most significant result.  Terminality is both conceptually and formally the simpler and more elegant notion, and should therefore be taken to be the more fundamental one.  Some other consequences are: 
\bit
\item Much attention has been given to the notion of non-signalling in several frameworks for theories more general than quantum theory, and maybe, these frameworks should be reconsidered.
\item While in \cite{Chiri1, Chiri2} terminality was taken to mean non-signalling from the future, we showed here that it does much more than just that, also implying non-signalling in the usual sense.
\item The reason why our result demystifies the result of  \cite{CoeckeLal}
is that one simply should not try to match causal structure with non-signalling.  Causal structure is only one ingredient when defining non-signalling, and while causal structure admits a clear notion of time-reversal, the other ingredient, the process theory does not admit such a thing, since it is governed by a manifestly time-asymmetric principle such as terminality.    
\eit  

\section*{Acknowledgement}

Rob Spekkens has  for quite a while already emphasised the importance  for quantum foundations of causal structure as in causal networks. 
In chats last summer in Benasque he indicated that this may provide the key to demystifying \cite{CoeckeLal}, which, as we demonstrated here, it indeed did. In the same chat, he also strongly voiced his concerns about the the notion of non-signalling as in quantum information, which hopefully, here we (at least in part) also dethroned.   

The QPL referees provided some very useful feedback on the submitted draft and Tobias Fritz and Raymond Lal where helpful by carefully examining the relationship of our work to theirs.

\providecommand{\urlalt}[2]{\href{#1}{#2}}
\providecommand{\doi}[1]{doi:\urlalt{http://dx.doi.org/#1}{#1}}


%
%
%
%
%
%
%
%

\end{document}